\documentclass[prl,twocolumn,showpacs,preprintnumbers,amsmath,amssymb,superscriptaddress]{revtex4}

\usepackage{bm}
\usepackage{graphicx}
\usepackage{amsbsy}
\usepackage{amsmath}
\usepackage{amsfonts}
\usepackage{amsthm}

\begin{document}

\theoremstyle{plain}
\newtheorem{theorem}{Theorem}
\newtheorem{lemma}[theorem]{Lemma}
\newtheorem{corollary}[theorem]{Corollary}
\newtheorem{conjecture}[theorem]{Conjecture}
\newtheorem{proposition}[theorem]{Proposition}

\theoremstyle{definition}
\newtheorem{definition}{Definition}

\theoremstyle{remark}
\newtheorem*{remark}{Remark}
\newtheorem{example}{Example}

\title{Evolution and Symmetry of Multipartite Entanglement}   

\author{Gilad Gour}\email{gour@math.ucalgary.ca}
\affiliation{Institute for Quantum Information Science and 
Department of Mathematics and Statistics,
University of Calgary, 2500 University Drive NW,
Calgary, Alberta, Canada T2N 1N4}

\date{\today}

\begin{abstract} 
We discover a simple factorization law describing how multipartite entanglement of a composite quantum system 
evolves when one of the subsystems undergoes an arbitrary physical process. This multipartite entanglement decay is 
determined uniquely by a single factor we call the entanglement resilience factor (ERF). Since the ERF is a function of
the quantum channel alone, we find that multipartite entanglement evolves in exactly the same way as bipartite (two qudits)
entanglement. For the two qubits case, our factorization law reduces to the main result of Nature Physics \textbf{4}, 99 (2008).
In addition, for a permutation $P$, we provide an operational definition of $P$-asymmetry of entanglement, and find 
the conditions when a permuted version of a state can be achieved by local means.
\end{abstract}  

\pacs{03.67.Mn, 03.67.Hk, 03.65.Ud}

\maketitle

With the emergence of quantum information science in recent years, much effort has been given to the study of entanglement~\cite{Hor09,Ple07}. It was realized that highly entangled states are the most desirable resources for a variety of quantum information processing (QIP) tasks, such as quantum teleportation~\cite{Ben93}, super-dense coding~\cite{Ben92}, entanglement-based quantum cryptography~\cite{Eke91}, error
correcting codes~\cite{Sch99}, and more recently, one-way quantum computation~\cite{Bri01}. Due to the effect of decoherance induced by the coupling of the subsystems with the environment, the entanglement of the composite quantum system decreases in time. It is therefore critical, for the implementations of many important QIP tasks, to understand the behaviour of entanglement under the influence of decoherance or noise.

To study the evolution of entanglement it seems to be necessary first to study the evolution of the quantum state describing the composite system and then to calculate its entanglement. For example, a situation where no energy is exchanged with the environment, the master equation involving the Lindbland operators can be used to determine the state evolution. Indeed, the elaborate theory on state evolution was the method used by many researchers (e.g. see references in~\cite{Kon08}). However, the drawback of this technique is that for multipartite systems (or higher dimensional systems) the state equation can be very hard to solve and therefore the evolution of entanglement can be determined only in very special cases. 
Quite recently, a new way was found~\cite{Kon08} (see also~\cite{GS} for an earlier similar work) to characterize the evolution of entanglement in two qubits systems, by which the evolution of concurrence~\cite{Woo98} (a two qubit measure of entanglement) is determined directly in terms of the evolution
of a maximally entangled state; i.e. a Bell state. This technique was generalized to determine the evolution of the G-concurrence~\cite{Gou05} of two qudits in~\cite{Mar08}. In both~\cite{Kon08} and~\cite{Mar08} the authors used the Choi-Jamiolkowski isomorphism in order to derive the equations
describing the time evolution of entanglement. Hence, since the Choi-Jamiolkowski isomorphism applies only for bipartite systems, it may give the impression that such entanglement-evolution equations can
not be extended to multipartite settings.  

In this Letter we discover a simple factorization law describing how multipartite entanglement of a composite quantum system 
evolves when one of the subsystems undergoes an arbitrary physical process. Quite remarkable, this factorization law
holds for arbitrary number of parties, and reduces to the factorization law given in~\cite{Kon08,Mar08} for the bipartite case. 
Our key idea is to use measures of entanglement that are invariant under the group 
$G\equiv \text{SL}(d_1,\mathbb{C})\otimes\text{SL}(d_2,\mathbb{C})\otimes\cdots\otimes\text{SL}(d_n,\mathbb{C})$,
where $d_1,d_2,...,d_n$ are the dimensions of the $n$-subsytems, and $\text{SL}(d,\mathbb{C})$ is the group of $d\times d$ complex matrices with determinant 1. The group $G$ represents (determinant 1) stochastic 
operations assisted by classical communications (SLOCC) and has been used extensively in the classifications of multi-partite entanglement. It is therefore clear from our analysis that even in the bipartite case, it is the invariance under $G$, rather than the Cho-Jemiolkowski isomorphism, that is necessary for the derivation of the factorization law. 

In addition to the factorization law, we also provide an operational definition of $P$-asymmetry of entanglement:
a multipartite entangled state contains $P$-asymmetric entanglement if its subsystems can not be permuted (according to the 
permutation $P$) by means of LOCC. We show that in general states have $P$-asymmetric entanglement, and by using
measures of entanglement that are invariant under $G$, we are able to generalize the main result of~\cite{Hor10} to the case
of multi-partite systems.

While two-party entanglement was very well studied, entanglement in multi-party systems is far less understood.
Perhaps one of the reasons is that $n$ qubits (with $n>3$) can be entangled in an uncountable number of 
ways~\cite{Vid00,Ver03,GW} with respect to SLOCC. 
It is therefore
not very clear what role entanglement monotones can play in multi-qubits system unless they are defined operationally.
One exception from this conclusion are entanglement monotones that are defined in terms of SL-invariant 
polynomials~\cite{GW,Ver03,Uh00,Luq03,Jens,W,Miy03,Cof00}.   

For example, for two qubits the concurrence~\cite{Woo98} is an entanglement monotone that is invariant under the action
of the group $\text{SL}(2,\mathbb{C})\otimes \text{SL}(2,\mathbb{C})$. That is, for a given two qubits state $|\psi\rangle$
the concurrence of $|\psi\rangle$ is the same as the concurrence of the unnormalized state $A\otimes B|\psi\rangle$, where
$A$ and $B$ are $2\times 2$ complex matrices with determinant 1. The concurrence originally was defined as a mathematical 
tool to calculate an operational measure of entanglement, namely, the entanglement of formation. Nevertheless, 
the concurrence, as being the only $\text{SL}(2,\mathbb{C})\otimes \text{SL}(2,\mathbb{C})$ invariant 2-qubits homogeneous measure of degree 1, was used as the key measure of entanglement in numerous QIP tasks (e.g. see~\cite{Hor09,Ple07} and references therein). 

Another example is the square root of the 3-tangle (SRT)~\cite{Cof00}. The SRT is the only $\text{SL}(2,\mathbb{C})\otimes \text{SL}(2,\mathbb{C})\otimes\text{SL}(2,\mathbb{C})$ invariant measure of entanglement that is homogenous of degree 1. 
The SRT also plays an important role in relation to monogamy of entanglement, and in fact it is the \emph{only} measure that capture the 3-way entanglement.
For 4-qubits or more, the picture is different since there are many homogenous SL-invariant measures of entanglement, such as
the the square root 4-tangle~\cite{Uh00} or the 24th root of the Hyperdeterminant~\cite{Miy03}. We now define all such measures that will be satisfying our factorization law; such measures were first discussed in~\cite{Ver03}.

\begin{definition}\label{ginv}
Set $G\equiv \text{SL}(d_1,\mathbb{C})\otimes\text{SL}(d_2,\mathbb{C})\otimes\cdots\otimes\text{SL}(d_n,\mathbb{C})$,
$\mathcal{H}_n\equiv \mathbb{C}^{d_1}\otimes\mathbb{C}^{d_2}\otimes\cdots\otimes\mathbb{C}^{d_n}$, and 
$\mathcal{B}(\mathcal{H}_n)$ the set of all bounded operators (e.g. density matrices) acting on $\mathcal{H}_n$.
A SL-invariant multi-partite measure of entanglement, $E_{inv}$, is a non-zero function from $\mathcal{B}(\mathcal{H}_n)$
to the non-negative real numbers satisfying the following:\\
\textbf{(i)} It is $G$-invariant; that is, 
$$
E_{inv}(g\rho g^{\dag})=E_{inv}(\rho)\;,
$$
for all $g\in G$ and $\rho\in\mathcal{B}(\mathcal{H}_n)$.\\
\textbf{(ii)} It is homogenous of degree 1; i.e. $$E_{inv}(r\rho)=rE_{inv}(\rho)$$
for all non-negative $r$ and all $\rho\in\mathcal{B}(\mathcal{H}_n)$.\\
\textbf{(iii)} On mixed states it is given in terms of the convex roof extension; That is,
$$
E_{inv}(\rho)=\min\sum_{i}p_iE_{inv}(\psi_i)\;,
$$
where the minimum is taken with respect to all pure states decompositions
of $\rho=\sum_{i}p_i\psi_i$ (here $\psi_i\equiv|\psi_i\rangle\langle\psi_{i}|$ is a rank 1 density matrix).
\end{definition}

\begin{remark}
The criteria in the definition above guarantee that $E_{inv}$ is  
an entanglement monotone~\cite{Ver03}. Note also that the construction via the convex roof extension is consistent
with conditions (i) and (ii).
The concurrence, the G-concurrence, and the SRT are all
satisfying the conditions in the definition above.
It can be easily checked that $E_{inv}$ is unique (up to multiplication by a positive constant) for the bipartite case with $d_1=d_2$ and for three qubits, but it is not unique for $n$-qubits with $n>3$. Indeed, for 4 qubits there are 4 algebraically independent SL-invariant polynomials that generate a whole family of such SL-invariant measures~\cite{Luq03,Jens,W,GW}. Note however that
if the dimensions of the subsystems $\{d_i\}$ are not all equal then a SL-invariant measure $E_{inv}$ may not exists. For example, in the bipartite case with $d_1\neq d_2$, $E_{inv}$ does not exists. For three parties, on the other hand, with $d_1=d_2=2$ and $d_3=3$, such a measure exists; it is given in terms of the Hyperdeterminant~\cite{Miy03}.
\end{remark}

The criteria in the definition above are motivated from two observations. First,
if two states are connected by SLOCC reversible transformation, then essentially they
both must have the same type (though not necessarily the same amount) of multi-qubit entanglement.
That is, let $|\psi\rangle\in\mathcal{H}_n$ and let $|\varphi\rangle=g|\psi\rangle/\|g|\psi\rangle\|$, where $g\in G$. 
Then, criteria (i) and (ii) guarantee that if $E_{inv}(|\psi\rangle)\neq 0$ then also $E_{inv}(|\varphi\rangle)\neq 0$. The second observation says something about
the \emph{amounts} of the multipartite entanglement in $|\psi\rangle$ and $|\varphi\rangle$. It is based on some results first 
discovered in~\cite{Ver03} and discussed further in~\cite{GW}. We now describe shortly these results.

Let $|\psi\rangle\in\mathcal{H}_n$ and consider the set of (in general non-normalized) states $G|\psi\rangle$ (i.e. the orbit
of $|\psi\rangle$ under $G$). By definition, if $|\psi\rangle$ is generic then $G|\psi\rangle$ is closed. 
Therefore, for most states $G|\psi\rangle$ is closed. 
If $G|\psi\rangle$ is not closed then consider its closure, and denote by $|\tilde\varphi\rangle$ the state in $G|\psi\rangle$ with the minimum norm; that is 
$\langle\tilde{\varphi}|\tilde{\varphi}\rangle\leq \langle\tilde{\phi}|\tilde{\phi}\rangle$ for all $|\tilde{\phi}\rangle\in G|\psi\rangle$.
The state $|\varphi\rangle\equiv |\tilde{\varphi}\rangle/\sqrt{\langle\tilde{\varphi}|\tilde{\varphi}\rangle}$ 
is called a \emph{normal form}~\cite{Ver03} (see also the critical set in appendix A of~\cite{GW}). Moreover, note that if
the normalized state $|\psi\rangle$ is a normal form, then $\|g|\psi\rangle\|\geq 1$ for all $g\in G$, with equality if and only if 
$g\in\text{SU}(d_1)\otimes \text{SU}(d_2)\otimes\cdots\otimes\text{SU}(d_n)$~\cite{Ver03,GW}.

The properties (i) and (ii) in the definition above imply that if $|\psi\rangle$ is a normal
form, and if $|\varphi\rangle=g|\psi\rangle/\|g|\psi\rangle\|$, then 
$$
E_{inv}(|\varphi\rangle)= \frac{E_{inv}(g|\psi\rangle)}{\|g|\psi\rangle\|}=\frac{E_{inv}(|\psi\rangle)}{\|g|\psi\rangle\|}\leq E_{inv}(|\psi\rangle)\;.
$$
That is, criteria (i) and (ii) imply that among all the states that can be obtained from $|\psi\rangle$ by SLOCC, the normal form $|\psi\rangle$ has the maximum amount of $E_{inv}$.  Indeed, in~\cite{Ver03,GW} it has been shown that $|\psi\rangle$ is a normal form if and only if each qudit is maximally entangled with the rest of the qudits (i.e. the local density matrices of all qudits are proportional to the identity). Therefore, criteria (i) and (ii) are consistent with this result, and we can consider the normal forms as 
maximally entangled states. 
More details and further motivation for the
first criterion can be found in the extensive literature on the characterization of entanglement in terms of SL-invariant polynomials
(see for example~\cite{GW,Ver03,Uh00,Luq03,Jens,W,Miy03} and references therein).
Our last remark on Def.~\ref{ginv} we summarize in the following lemma.
\begin{lemma}\label{lemma}
Let $|\psi\rangle\in\mathcal{H}_n$ and $E_{inv}$ as defined in Def.~\ref{ginv}.
Then, for a matrix $M:\mathcal{H}_n\to\mathcal{H}_n$ of the form $M=A_1\otimes A_2\otimes\cdots\otimes A_n$
we have $E_{inv}(M|\psi\rangle\langle\psi|M^{\dag})=0$ if there exists $1\leq k\leq n$ such that $\det A_k=0$.
\end{lemma}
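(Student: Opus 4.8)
The plan is to use only the two structural properties (i) $G$-invariance and (ii) homogeneity of degree one, together with the single hypothesis $\det A_k=0$; notably, no continuity or explicit form of $E_{inv}$ will be needed. Write $|\phi\rangle\equiv M|\psi\rangle$, so the target quantity is $E_{inv}(|\phi\rangle\langle\phi|)$. The key observation is that $\det A_k=0$ forces the rank $r$ of $A_k$ to satisfy $r<d_k$, so its range $V_k\equiv A_k(\mathbb{C}^{d_k})$ is a \emph{proper} subspace of $\mathbb{C}^{d_k}$. Since $A_k$ acts only on the $k$-th tensor factor and sends every vector into $V_k$, the state $|\phi\rangle$ lives entirely in $\mathbb{C}^{d_1}\otimes\cdots\otimes V_k\otimes\cdots\otimes\mathbb{C}^{d_n}$; equivalently, the $k$-th local leg of $|\phi\rangle$ is supported on $V_k$.

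Next I would manufacture an element of $G$ that uniformly shrinks $|\phi\rangle$. Let $P$ and $Q=I-P$ be the orthogonal projections onto $V_k$ and $V_k^{\perp}$, and for $t>0$ set
\[
S_t \;=\; t\,P \;+\; t^{-r/(d_k-r)}\,Q \;\in\; \text{SL}(d_k,\mathbb{C}),
\]
which is legitimate precisely because $d_k-r\geq 1$: indeed $\det S_t=t^{r}\,(t^{-r/(d_k-r)})^{d_k-r}=1$. Embedding it as $\hat{S}_t\equiv I\otimes\cdots\otimes S_t\otimes\cdots\otimes I\in G$, the fact that $|\phi\rangle$ has its $k$-th leg inside $V_k$ gives $\hat{S}_t|\phi\rangle = t|\phi\rangle$, since $S_t$ acts as multiplication by $t$ on $V_k$.

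Finally I would combine the two properties. By $G$-invariance (i) applied to $\hat{S}_t$, followed by homogeneity (ii),
\[
E_{inv}(|\phi\rangle\langle\phi|)=E_{inv}\!\big(\hat{S}_t|\phi\rangle\langle\phi|\hat{S}_t^{\dag}\big)=E_{inv}\!\big(t^{2}|\phi\rangle\langle\phi|\big)=t^{2}E_{inv}(|\phi\rangle\langle\phi|),
\]
so that $(t^{2}-1)E_{inv}(|\phi\rangle\langle\phi|)=0$ for every $t>0$. Choosing any $t\neq1$ forces $E_{inv}(|\phi\rangle\langle\phi|)=0$, which is the claim.

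The main obstacle, and the only place where the hypothesis $\det A_k=0$ is genuinely used, is the construction of $S_t$: one needs an $\text{SL}$ operator that contracts the support $V_k$ while leaving the measure invariant, and this is possible only when $V_k$ is a proper subspace, i.e.\ when there is a complementary subspace $V_k^{\perp}$ of dimension $d_k-r\geq 1$ on which to park the compensating factor $t^{-r/(d_k-r)}$ that restores $\det=1$. Had $A_k$ been invertible, one would have $V_k=\mathbb{C}^{d_k}$ and no such contraction inside $\text{SL}$ would exist, consistent with the fact that invertible local maps preserve the SLOCC class and hence cannot annihilate $E_{inv}$. A minor point to verify is that one is free to work throughout with the unnormalized rank-one operator $|\phi\rangle\langle\phi|$, which is exactly what properties (i) and (ii) are designed to permit.
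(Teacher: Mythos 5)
Your proof is correct and is essentially the paper's own argument: the paper likewise notes that the output state's first leg lies in a proper subspace (spanned by $A_1|k\rangle$ for $|k\rangle$ completing a kernel vector to a basis, i.e.\ the range of $A_1$), builds the same determinant-one dilation $D_t=t\,P_r+t^{r/(r-d_1)}(\mathbb{I}-P_r)$, and concludes $E_{inv}=t^2E_{inv}$ for all $t>0$. Your version merely streamlines the bookkeeping by identifying the support directly as the range of $A_k$ rather than writing out the explicit decomposition of $M|\psi\rangle$.
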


\begin{proof}
Without loss of generality, assume $\det A_1=0$. Denote by $|0\rangle$ a normalized vector in $\mathbb{C}^{d_1}$
such that $A_1|0\rangle=0$. Denote by $|k\rangle$ (with $k=1,2,...,d_1-1$) other vectors in $\mathbb{C}^{d_1}$, completing
$|0\rangle$ to a basis. With this basis we can write 
\begin{equation}\label{zero}
M|\psi\rangle=\sum_{k=1}^{d_1-1}|v_k\rangle|\varphi_k\rangle\;,\;\text{with}\;\; 
|\varphi_k\rangle\in\mathbb{C}^{d_2}\otimes\cdots\otimes\mathbb{C}^{d_n},
\end{equation}
and $|v_k\rangle\equiv A_1|k\rangle$.
Note that $|v_k\rangle$ and $|\varphi_k\rangle$ are not necessarily normalized and that the sum above starts from $k=1$ 
and not from $k=0$ since $A_1|0\rangle=0$.
Denote by $P_r$ the projection to the space $V=\text{span}\{|v_k\rangle\}_{k=1}^{d_1-1}$, where $r=\dim V$.
For $0<t\in\mathbb{R}$, let $D_t$ be the following $d_1\times d_1$ matrix:
$$
D_t=t^{r/(r-d_1)}\left(\mathbb{I}-P_r\right)+tP_r\;.
$$
Denote also $g_t\equiv D_t\otimes\mathbb{I}\otimes\cdots\otimes\mathbb{I}$. Clearly, $\det g_t=1$; i.e. $g_t\in G$.
Hence,
\begin{align*}
& E_{inv}\left[M|\psi\rangle\langle\psi|M^{\dag}\right] =E_{inv}\left[g_{t}M|\psi\rangle\langle\psi|M^{\dag}g^{\dag}_{t}\right]\\
& =E_{inv}\left[t^2 M|\psi\rangle\langle\psi|M^{\dag}\right]=t^2E_{inv}\left[M|\psi\rangle\langle\psi|M^{\dag}\right],
\end{align*}
where we have used Eq.(\ref{zero}) and criteria (i) and (ii) of Def.~\ref{ginv}. Since the above equality is true for all $t>0$, we must have $E_{inv}\left[M|\psi\rangle\langle\psi|M^{\dag}\right]=0$.
\end{proof}
We are now ready to discuss the main result of this paper.

\begin{definition}
Let $\$: \mathcal{B}(\mathbb{C}^{d})\to\mathcal{B}(\mathbb{C}^{d})$, be a quantum channel acting on $d\times d$ positive 
semi-definite matrices (i.e. density matrices). Any such channel has Kraus representation 
$\$(\cdot)=\sum_{j}K_j(\cdot)K_{j}^{\dag}$, with Kraus operators $\sum_{j}K_{j}^{\dag}K_{j}\leq 1$. 
We define the \emph{entanglement resilience factor} of $\$$ to be
\begin{equation}\label{erf}
\mathcal{F}[\$]\equiv\min\sum_{j}\left|\det K_j\right|^{2/d},
\end{equation}
where the minimum is taken with respect to all the Kraus representations of $\$$.
\end{definition}
Note that $0\leq \mathcal{F}[\$]\leq 1$ due to the geometric-arithmetic inequality and the fact that 
for all Kraus representations of $\$$, $\sum_{j}K_{j}^{\dag}K_{j}\leq 1$. Recall also that all Kraus representations of
a quantum channel are related by a unitary matrix.
In the theorem below we give an operational interpretation
for $\mathcal{F}[\$]$.
\begin{theorem}\label{maint}
Let $|\psi\rangle\in\mathcal{H}_n$ and $|\phi\rangle\in\mathcal{H}_n$ be two states with non zero value of $E_{inv}$.
Denote by $\Lambda\equiv \$\otimes\mathbb{I}\otimes\cdots\otimes\mathbb{I}$, 
where $\$$ is an arbitrary quantum channel, which may represent the influence of the environment on the first qudit.
Then,
\begin{equation}\label{main}
\frac{E_{inv}\left[\Lambda\left(|\psi\rangle\langle\psi|\right)\right]}
{E_{inv}(|\psi\rangle\langle \mathcal{\psi}|)}=
\frac{E_{inv}\left[\Lambda\left(|\mathcal{\phi}\rangle\langle \mathcal{\phi}|\right)\right]}
{E_{inv}(|\mathcal{\phi}\rangle\langle \mathcal{\phi}|)}=\mathcal{F}[\$]\;.
\end{equation}
That is, the ratio between the final entanglement and the initial entanglement depends solely on the entanglement resilience 
factor of the channel.
\end{theorem}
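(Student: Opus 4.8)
The plan is to reduce the whole statement to a single computation on pure states, and then to recognize that the freedom in choosing a Kraus representation of $\$$ is precisely the freedom in choosing a pure-state decomposition of the output state. The basic building block is the identity, valid for any $d\times d$ matrix $K$ and $M\equiv K\otimes\mathbb{I}\otimes\cdots\otimes\mathbb{I}$, that $E_{inv}(M|\psi\rangle\langle\psi|M^{\dag})=|\det K|^{2/d}E_{inv}(|\psi\rangle\langle\psi|)$. For $\det K=0$ this is exactly Lemma~\ref{lemma} (note $|\det K|^{2/d}=0$ as well). For $\det K\neq 0$ I would factor $K=(\det K)^{1/d}\tilde K$ with $\det\tilde K=1$, so that $g\equiv\tilde K\otimes\mathbb{I}\otimes\cdots\otimes\mathbb{I}\in G$ and $M|\psi\rangle\langle\psi|M^{\dag}=|\det K|^{2/d}\,g|\psi\rangle\langle\psi|g^{\dag}$; homogeneity (ii) then pulls out the scalar and $G$-invariance (i) removes $g$, giving the identity.

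First I would write $\Lambda(|\psi\rangle\langle\psi|)=\sum_j M_j|\psi\rangle\langle\psi|M_j^{\dag}$ with $M_j=K_j\otimes\mathbb{I}\otimes\cdots\otimes\mathbb{I}$, an ensemble of (unnormalized) pure states. By homogeneity, the average entanglement of this ensemble equals $\sum_j E_{inv}(M_j|\psi\rangle\langle\psi|M_j^{\dag})=\big(\sum_j|\det K_j|^{2/d}\big)E_{inv}(|\psi\rangle\langle\psi|)$. Since all Kraus representations produce the same output state, feeding the convex roof this family of ensembles and minimizing over Kraus representations already yields the upper bound $E_{inv}[\Lambda(|\psi\rangle\langle\psi|)]\leq\mathcal{F}[\$]\,E_{inv}(|\psi\rangle\langle\psi|)$.

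The matching lower bound is the crux: I must rule out any advantage from decompositions not coming from a Kraus representation. Here I would use the unitary freedom of ensembles (the Hughston--Jozsa--Wootters theorem): every pure-state decomposition $\{|\eta_i\rangle\}$ of $\Lambda(|\psi\rangle\langle\psi|)$ satisfies $|\eta_i\rangle=\sum_j u_{ij}M_j|\psi\rangle$ for some matrix $(u_{ij})$ with orthonormal columns (after padding with zero operators). Setting $L_i\equiv\sum_j u_{ij}K_j$ gives $|\eta_i\rangle=(L_i\otimes\mathbb{I}\otimes\cdots\otimes\mathbb{I})|\psi\rangle$, and column-orthonormality yields $\sum_i L_i(\cdot)L_i^{\dag}=\$(\cdot)$, so $\{L_i\}$ is itself a legitimate Kraus representation of $\$$. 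Applying the pure-state identity termwise, the average entanglement of this ensemble is $\big(\sum_i|\det L_i|^{2/d}\big)E_{inv}(|\psi\rangle\langle\psi|)\geq\mathcal{F}[\$]\,E_{inv}(|\psi\rangle\langle\psi|)$, and since this holds for every decomposition the convex-roof minimum is bounded below by $\mathcal{F}[\$]\,E_{inv}(|\psi\rangle\langle\psi|)$.

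Combining the two bounds gives $E_{inv}[\Lambda(|\psi\rangle\langle\psi|)]=\mathcal{F}[\$]\,E_{inv}(|\psi\rangle\langle\psi|)$, and because $\mathcal{F}[\$]$ depends only on the channel the identical relation holds for $|\phi\rangle$; dividing yields the stated factorization. I expect the main obstacle to be the faithful bookkeeping in the lower bound --- establishing that the HJW correspondence between output decompositions and Kraus representations is exact, so that the convex-roof optimization and the minimization defining $\mathcal{F}[\$]$ genuinely range over the same set of values, while correctly handling zero operators and zero-determinant terms through Lemma~\ref{lemma}.
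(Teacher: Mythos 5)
Your proposal is correct and follows essentially the same route as the paper: pull out $|\det K|^{2/d}$ via $G$-invariance and homogeneity (using Lemma~\ref{lemma} for singular Kraus operators), and identify pure-state decompositions of $\Lambda(|\psi\rangle\langle\psi|)$ with Kraus representations of $\$$ through the unitary (HJW) freedom, so that the convex-roof minimization and the minimization defining $\mathcal{F}[\$]$ range over the same values. Your explicit two-sided bound is just a more careful rendering of the paper's compressed final step (``$U$ has been chosen such that $\sum_i|\det M_i|^{2/d}$ gets the minimum possible value\dots hence the equality must hold''), not a different argument.
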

\begin{remark}
In the bipartite case with $d_1=d_2$ the formula above reduce to the one given in~\cite{Kon08,Mar08}, by replacing the state 
$|\psi\rangle$
with a maximally entangled state, and by taking $E_{inv}$ to be the concurrence~\cite{Woo98} or G-concurrence~\cite{Gou05} for two qubits or two qudits, respectively. However, in~\cite{Kon08,Mar08} the entanglement resilience factor (ERF) of the channel was not introduced. A remarkable observation is that the ERF depends only on a \emph{single} qudit channel. Now, consider the case of two qubits. Then, by taking $E_{inv}$ to be the concurrence $C$ and replacing $|\psi\rangle$ above with a Bell state $|\psi^+\rangle$ we get the following formula for the ERF:
\begin{equation}\label{conERF}
\mathcal{F}[\$]=C\left(\$\otimes\mathbb{I}(|\psi^{+}\rangle\langle\psi^{+}|)\right)\;,
\end{equation}
which can be determined completely by using the Wootters formula. It is remarkable that for any number of qubits and for any choice
of $E_{inv}$, this is the unique formula that is needed to be calculated in order to determine the ERF of a channel acting on a qubit.
Similarly, for a channel $\$$ acting on a qudit, the ERF is given in terms of the G-concurrence~\cite{Gou05}:
\begin{equation}\label{gconERF}
\mathcal{F}[\$]=G\left(\$\otimes\mathbb{I}(|\psi^{+}\rangle\langle\psi^{+}|)\right)\;,
\end{equation} 
where here $|\psi^{+}\rangle$ stands for a maximally entangled state in $\mathbb{C}^d\otimes\mathbb{C}^d$.
Eq.(\ref{gconERF}) provide the unique value of the ERF and can be used to the determine the ratios in Eq.(\ref{main})
independent on the choice of $E_{inv}$ or the number of qudits involved.
\end{remark}
\begin{proof}
Given the channel $\$(\cdot)=\sum_{j}K_j(\cdot)K_{j}^{\dag}$, the density matrix 
$\rho\equiv \left(\$\otimes\mathbb{I}\otimes\cdots\otimes\mathbb{I}\right)|\psi\rangle\langle\psi|$ 
have the following pure state decomposition:
$$
\rho=\sum_{j}|\tilde{v}_j\rangle\langle\tilde{v}_j|\;\;,\;\;|\tilde{v}_j\rangle\equiv K_j\otimes \mathbb{I}\otimes\cdots\otimes \mathbb{I} |\psi\rangle\;.
$$
Note that $|\tilde{v}_j\rangle$ are \emph{not} normalized. Moreover, denote by $\rho=\sum_{i}|\tilde{w}_i\rangle\langle\tilde{w}_i|$  the optimal decompositions of $\rho$.  That is, denote $p_i\equiv\langle \tilde{w}_i|\tilde{w}_i\rangle$ 
(so that $p_{i}^{-1/2}|w_i\rangle$ is normalized) 
$$
E_{inv}(\rho)=\sum_{i}p_iE_{inv}\left(\frac{1}{p_i}|\tilde{w}_i\rangle\langle\tilde{w}_i|\right)=\sum_{i}E_{inv}\left(|\tilde{w}_i\rangle\langle\tilde{w}_i|\right)
$$
where we have used the fact that $E_{inv}$ is homogeneous of degree 1. 
Now, since $\{|\tilde{v}_j\rangle\}$ and $\{|\tilde{w}_i\rangle\}$ 
are two deferent decompositions of $\rho$, they are related to each other via a unitary matrix $U$.
That is, if the two sets $\{|\tilde{v}_j\rangle\}$ and $\{|\tilde{w}_i\rangle\}$ do not have the same number
of vectors we add zero vectors to the smaller set and then we have
$$
|\tilde{w}_i\rangle=\sum_{j}U_{ij}|\tilde{v}_j\rangle=M_i\otimes \mathbb{I}\otimes\cdots\otimes \mathbb{I} |\psi\rangle\;,
$$
where $U$ is a unitary matrix, and $M_i\equiv\sum_{j}U_{ij}K_{j}$ form another Kraus representation to the \emph{same}
quantum channel $\$$. Now, w.l.o.g. (see lemma~\ref{lemma}) we can assume that $\det M_i\neq 0$. Hence, we can write
$$
|\tilde{w}_i\rangle=(\det M_i)^{1/d}\left(\frac{M_i}{(\det M_i)^{1/d}}\otimes \mathbb{I}\otimes\cdots\otimes \mathbb{I}\right) |\psi\rangle\;.
$$
Since $E_{inv}$ is $G$-invariant and homogeneous, we get $E_{inv}(|\tilde{w}_i\rangle)=\left |\det M_i\right|^{2/d}E_{inv}(|\psi\rangle)$ and thus
\begin{equation}\label{sig}
E_{inv}(\rho)=\sum_{i}\left |\det M_i\right|^{2/d}E_{inv}(|\psi\rangle)\;.
\end{equation}
What is left to show is that 
\begin{equation}\label{key}
\mathcal{F}[\$]=\sum_{i}\left |\det M_i\right|^{2/d}\;.
\end{equation}
To see that, note that the unitary $U$ has been chosen such
that the decomposition $\rho=\sum_{i}|\tilde{w}_i\rangle\langle\tilde{w}_i|$ is optimal. Based on Eq.(\ref{sig}), 
$U$ has been chosen such that $\sum_{i}\left |\det M_i\right|^{2/d}$ gets the minimum possible value among all the different
Kraus representations of $\$$. Hence, the equality in Eq.(\ref{key}) must hold. 
\end{proof}
The measure $E_{inv}$ is a convex function since it is defined in terms of the convex roof extension. Thus, we have the following
corollary.
\begin{corollary}
Let $\rho\in\mathcal{H}_n$ be a multipartite mixed state with non-zero value of $E_{inv}$, and let $\Lambda$ and $\$$ be as 
in the theorem above. Then,
$$
\frac{E_{inv}\left[\Lambda\left(\rho\right)\right]}
{E_{inv}(\rho)}\leq\mathcal{F}[\$]\;.
$$ 
\end{corollary}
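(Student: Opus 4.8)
The plan is to reduce the mixed-state bound to the pure-state equality already established in Theorem~\ref{maint}, exploiting the convexity of $E_{inv}$ noted just above the corollary. First I would fix an optimal pure-state decomposition $\rho=\sum_i p_i|\psi_i\rangle\langle\psi_i|$ realizing the convex roof, so that $E_{inv}(\rho)=\sum_i p_i E_{inv}(|\psi_i\rangle\langle\psi_i|)$. Since $\Lambda$ is linear, applying it termwise yields the (generally non-optimal) decomposition $\Lambda(\rho)=\sum_i p_i\,\Lambda(|\psi_i\rangle\langle\psi_i|)$ of the output state into the mixed states $\Lambda(|\psi_i\rangle\langle\psi_i|)$.

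Next I would invoke convexity of $E_{inv}$ to obtain
$$
E_{inv}[\Lambda(\rho)]\le\sum_i p_i\,E_{inv}\bigl[\Lambda(|\psi_i\rangle\langle\psi_i|)\bigr].
$$
For each component with $E_{inv}(|\psi_i\rangle\langle\psi_i|)\neq0$, Theorem~\ref{maint} applies directly and gives $E_{inv}[\Lambda(|\psi_i\rangle\langle\psi_i|)]=\mathcal{F}[\$]\,E_{inv}(|\psi_i\rangle\langle\psi_i|)$. Substituting and factoring out $\mathcal{F}[\$]$ then produces
$$
E_{inv}[\Lambda(\rho)]\le\mathcal{F}[\$]\sum_i p_i\,E_{inv}(|\psi_i\rangle\langle\psi_i|)=\mathcal{F}[\$]\,E_{inv}(\rho),
$$
which is the asserted inequality (the denominator is nonzero by hypothesis, so the ratio is well defined).

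The only point needing care is the pure components with $E_{inv}(|\psi_i\rangle\langle\psi_i|)=0$, to which Theorem~\ref{maint} does not literally apply. For these I would argue, exactly as in the proof of the theorem, that $\Lambda(|\psi_i\rangle\langle\psi_i|)$ decomposes into (rank-one) terms built from $K_j\otimes\mathbb{I}\otimes\cdots\otimes\mathbb{I}\,|\psi_i\rangle$, each of which has vanishing $E_{inv}$: either by Lemma~\ref{lemma} when $\det K_j=0$, or, when $\det K_j\neq0$, by $G$-invariance and homogeneity which reduce its value to $|\det K_j|^{2/d}E_{inv}(|\psi_i\rangle\langle\psi_i|)=0$. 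Hence $E_{inv}[\Lambda(|\psi_i\rangle\langle\psi_i|)]=0$ and these components simply drop out of both sums. I expect the genuine substance of the result to lie entirely in the convexity step: the induced decomposition of $\Lambda(\rho)$ need not be optimal, so there is no matching lower bound available and the sharp factorization law of Theorem~\ref{maint} necessarily degrades to an inequality for mixed states.
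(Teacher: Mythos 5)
Your proposal is correct and follows the same route the paper intends: the paper's entire proof is the remark that $E_{inv}$ is convex (being a convex roof), combined with the pure-state factorization law of Theorem~\ref{maint} applied to an optimal decomposition of $\rho$. Your additional care with the components having $E_{inv}(|\psi_i\rangle\langle\psi_i|)=0$ --- showing $E_{inv}[\Lambda(|\psi_i\rangle\langle\psi_i|)]=0$ via Lemma~\ref{lemma} and $G$-invariance plus homogeneity, so that Theorem~\ref{maint}'s nonvanishing hypothesis is not violated --- correctly fills a detail the paper leaves implicit, but does not change the approach.
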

The following corollary is an immediate consequence of the equation above.
\begin{corollary}
Let $\rho\in\mathcal{H}_n$ be a multipartite mixed state with non-zero value of $E_{inv}$, and let $\{\$_{k}\}_{k=1,2,...,n}$ be a set of $n$ quantum channels. Then,
\begin{equation}\label{nway}
\frac{E_{inv}\left[\$_1\otimes\$_2\otimes\cdots\otimes\$_n\left(\rho\right)\right]}
{E_{inv}(\rho)}\leq\prod_{k=1}^{n}\mathcal{F}[\$_k]\;.
\end{equation}
\end{corollary}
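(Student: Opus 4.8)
The plan is to reduce the $n$-party statement to $n$ applications of the preceding (single-party) corollary. The key structural observation is that channels acting on distinct subsystems commute, so I would write
$$
\$_1\otimes\$_2\otimes\cdots\otimes\$_n=\Lambda_n\circ\cdots\circ\Lambda_2\circ\Lambda_1,
$$
where $\Lambda_k$ denotes the operation that applies $\$_k$ to the $k$-th qudit and the identity to every other qudit. Because the defining properties of $E_{inv}$ and of the group $G$ in Def.~\ref{ginv} are completely symmetric in the $n$ parties, the preceding corollary holds verbatim with the single channel acting on \emph{any} fixed subsystem, not only the first; in particular it applies to each $\Lambda_k$ separately.

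Next I would iterate. Setting $\rho_0\equiv\rho$ and $\rho_k\equiv\Lambda_k(\rho_{k-1})$, so that $\rho_n=\$_1\otimes\$_2\otimes\cdots\otimes\$_n\left(\rho\right)$, the single-party corollary applied at stage $k$ yields
$$
E_{inv}(\rho_k)\leq\mathcal{F}[\$_k]\,E_{inv}(\rho_{k-1})
$$
whenever $E_{inv}(\rho_{k-1})\neq0$. Chaining these inequalities telescopes to $E_{inv}(\rho_n)\leq\left(\prod_{k=1}^{n}\mathcal{F}[\$_k]\right)E_{inv}(\rho)$, and dividing through by $E_{inv}(\rho)\neq0$ gives Eq.(\ref{nway}).

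The one point requiring care — and the main, if modest, obstacle — is the hypothesis $E_{inv}(\rho_{k-1})\neq0$ that is needed to invoke the single-party corollary at each stage, since the intermediate entanglement could in principle vanish before the final channel is applied. The hard part is therefore to close the argument in that degenerate case, and I would resolve it by monotonicity rather than by the corollary itself. Since each $\Lambda_k$ is a local operation on a single party it is in particular an LOCC map, so $E_{inv}$ (being an entanglement monotone, cf.\ the remark following Def.~\ref{ginv}) cannot increase under it; as $E_{inv}\geq0$, the vanishing of $E_{inv}(\rho_{m})$ at any stage $m<n$ forces $E_{inv}(\rho_n)=0$, making the left-hand side of Eq.(\ref{nway}) equal to zero and the inequality trivially true. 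Hence in every case the bound holds, with the product of the individual entanglement resilience factors controlling the total multipartite entanglement decay.
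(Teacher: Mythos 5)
Your proof is correct and takes essentially the same route the paper intends: the paper offers no argument beyond calling the corollary an ``immediate consequence'' of the preceding single-channel corollary, which is precisely your factorization $\$_1\otimes\cdots\otimes\$_n=\Lambda_n\circ\cdots\circ\Lambda_1$ with the single-party bound chained across the stages (using the party-symmetry of Def.~\ref{ginv}, with $d=d_k$ at stage $k$). Your explicit handling of the degenerate case $E_{inv}(\rho_{k-1})=0$ via monotonicity of $E_{inv}$ under local channels is a small but genuine rigor improvement over the paper's one-line dismissal, not a different approach.
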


As a simple illustration of the above theorem and corollaries, consider the case of three qubits. In three qubits the only 
$G$-invariant measure of entanglement, $E_{inv}$, is given by the SRT~\cite{Cof00} on pure states,
and on mixed states it is defined in terms of the convex roof extension. The GHZ state maximize this measure.
Applying the theorem above to this measure gives
$$
E_{inv}\big(\$\otimes\mathbb{I}\otimes\mathbb{I}|GHZ\rangle\langle GHZ |\big)=\mathcal{F}[\$]\;,
$$
where $\mathcal{F}[\$]$ can be calculated via Wootters formula (see Eq.(\ref{conERF})).
That is, we have found a closed formula for the SRT for all mixed states of the form
$
\rho=\$\otimes\mathbb{I}\otimes\mathbb{I} |GHZ\rangle\langle GHZ |
$. Moreover, the corollaries above provides upper bounds for $E_{inv}$ on states of the form 
$
\rho=\$_1\otimes\$_2\otimes\$_3 |GHZ\rangle\langle GHZ |
$.

One can also ask how the multipartite entanglement evolve after  a separable measurement is performed by the $n$ parties.
In the following lemma we obtain an upper bound on the ratio between the initial entanglement and the final average entanglement
after such a measurement is performed. This lemma will be very useful in our discussion on symmetry of multipartite entanglement.

\begin{lemma}\label{lem}
Let $\Lambda(\cdot)=\sum_k M_k (\cdot) M_{k}^{\dag}$ be a trace preserving separable operation,
where $M_k\equiv A_{1}^{(k)}\otimes A_{2}^{(k)}\otimes\cdots\otimes A_{n}^{(k)}$ and $\sum_{k}M_{k}^{\dag}M_{k}\leq 1$. 
Then,
\begin{equation}\label{l1}
\frac{\sum_{k}p_kE_{inv}(\sigma_k)}{E_{inv}(\rho)}\leq\sum_{k}|\det M_k|^{2/d}
\end{equation}
where $\sigma_k\equiv \frac{1}{p_k}M_k\rho M_{k}^{\dag}$ and $p_k\equiv\text{Tr}M_k\rho M_{k}^{\dag}$. 
Further,
\begin{equation}
\sum_{k} |\det M_k|^{\frac{2}{d}}  
=\sum_{k}|\det A_{1}^{(k)}|^{\frac{2}{d}}\cdots|\det A_{n}^{(k)}|^{\frac{2}{d}}\leq 1
\label{l2}
\end{equation}
with equality if and only if all the operators $\{A_{i}^{(k)}\}$ are proportional to unitaries. That is, if $\sum_{k}|\det M_k|^{2/d}= 1$ then $\Lambda$ is a mixture of product unitary operations.
\end{lemma}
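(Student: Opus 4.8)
The plan is to prove the two assertions in turn: the entanglement bound (\ref{l1}) by first establishing an exact pure-state factorization and then promoting it to mixed states via the convex-roof structure, and the determinant chain (\ref{l2}) by combining multiplicativity of the determinant under tensor products with the arithmetic--geometric mean (AM--GM) inequality and the trace-preserving constraint.

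First I would treat a pure input $\rho=|\psi\rangle\langle\psi|$. Whenever $\det A_i^{(k)}\neq0$ I write $A_i^{(k)}=(\det A_i^{(k)})^{1/d_i}\tilde A_i^{(k)}$ with $\tilde A_i^{(k)}\in\text{SL}(d_i,\mathbb{C})$, so that $\tilde A_1^{(k)}\otimes\cdots\otimes\tilde A_n^{(k)}\in G$; then the computation used in the proof of Theorem~\ref{maint}, invoking criteria (i) and (ii) of Def.~\ref{ginv}, gives $E_{inv}(M_k|\psi\rangle\langle\psi|M_k^{\dag})=\prod_i|\det A_i^{(k)}|^{2/d_i}\,E_{inv}(|\psi\rangle)$, while if some $\det A_i^{(k)}=0$ both sides vanish by Lemma~\ref{lemma}. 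By homogeneity of degree $1$ this reads $p_kE_{inv}(\sigma_k)=|\det M_k|^{2/d}E_{inv}(\rho)$, so (\ref{l1}) already holds with equality term by term for pure $\rho$. To pass to mixed $\rho$ I would fix an optimal convex-roof decomposition $\rho=\sum_l q_l|\psi_l\rangle\langle\psi_l|$, whence $p_k\sigma_k=M_k\rho M_k^{\dag}=\sum_l q_l M_k|\psi_l\rangle\langle\psi_l|M_k^{\dag}$ is a (generally non-optimal) decomposition into subnormalized pure terms. Since $E_{inv}$ is a convex roof it is subadditive under such direct-sum decompositions---a consequence of convexity and degree-$1$ homogeneity---so $E_{inv}(p_k\sigma_k)\le\sum_l E_{inv}(q_l M_k|\psi_l\rangle\langle\psi_l|M_k^{\dag})$; applying the factorization to each term and resumming over $l$ gives $p_kE_{inv}(\sigma_k)\le|\det M_k|^{2/d}E_{inv}(\rho)$, and summing over $k$ (dividing by $E_{inv}(\rho)\neq0$) yields (\ref{l1}).

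For (\ref{l2}) the first equality is multiplicativity of the determinant under tensor products, $\det(A_1\otimes\cdots\otimes A_n)=\prod_i(\det A_i)^{d/d_i}$ with $d=\dim\mathcal{H}_n$, which gives $|\det M_k|^{2/d}=\prod_i|\det A_i^{(k)}|^{2/d_i}$. For the bound I would write $|\det M_k|^{2/d}=(\det M_k^{\dag}M_k)^{1/d}$ and apply AM--GM to the eigenvalues of the positive operator $M_k^{\dag}M_k$, giving $(\det M_k^{\dag}M_k)^{1/d}\le\frac1d\,\text{Tr}\,M_k^{\dag}M_k$. Summing over $k$ and using $\sum_k M_k^{\dag}M_k\le\mathbb{I}$ together with $\text{Tr}\,\mathbb{I}=d$ yields $\sum_k|\det M_k|^{2/d}\le\frac1d\,\text{Tr}\sum_k M_k^{\dag}M_k\le1$.

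The equality analysis is where I expect the main obstacle. For the overall bound to be tight both inequalities must saturate: the trace step forces $\sum_k M_k^{\dag}M_k=\mathbb{I}$ (automatic when $\Lambda$ is trace preserving), while AM--GM equality on each nonzero term forces $M_k^{\dag}M_k=c_k\mathbb{I}$, i.e. $M_k\propto$ unitary. The delicate point is to conclude from $\bigotimes_i (A_i^{(k)})^{\dag}A_i^{(k)}\propto\mathbb{I}$ that \emph{each} factor satisfies $(A_i^{(k)})^{\dag}A_i^{(k)}\propto\mathbb{I}$; this uses that the spectrum of a tensor product consists of the products of the factors' eigenvalues, so a constant positive spectrum of the product forces each factor to have constant positive spectrum. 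Hence every $A_i^{(k)}$ is proportional to a unitary, $M_k=\sqrt{c_k}\,U_1^{(k)}\otimes\cdots\otimes U_n^{(k)}$ with $\sum_k c_k=1$, and $\Lambda$ is a convex mixture of product unitary operations.
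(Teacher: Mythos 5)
Your proposal is correct and takes essentially the same route as the paper's (one-sentence) proof: Eq.~(\ref{l1}) from the $G$-invariance and homogeneity criteria (i)--(ii) of Def.~\ref{ginv}, and Eq.~(\ref{l2}) from the arithmetic--geometric mean inequality applied to the eigenvalues of $M_k^{\dag}M_k$; you simply fill in the details the paper leaves implicit, namely the pure-state factorization with the $\det A_i^{(k)}=0$ case handled by Lemma~\ref{lemma}, the lift to mixed states via subadditivity of the convex roof over subnormalized pure decompositions, and the equality analysis (including the tensor-spectrum argument forcing each $(A_i^{(k)})^{\dag}A_i^{(k)}\propto\mathbb{I}$). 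Your reading of the exponent, $|\det M_k|^{2/d}=\prod_i|\det A_i^{(k)}|^{2/d_i}$ with $d=\dim\mathcal{H}_n$, is also the correct interpretation of the paper's slightly loose uniform $2/d$ notation.
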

\begin{proof}
The upper bound in Eq.(\ref{l1}) is a direct consequence of conditions (i) and (ii) in~Def.~\ref{ginv}, and the upper bound in 
Eq.(\ref{l2}) is a direct consequence of the geometric-arithmetic inequality.
\end{proof}

We now show that SL-invariant measures of multipartite entanglement can also be very useful to determine
the symmetry of multipartite entanglement.

\begin{definition}
Let $P$ be a permutation on $n$ parties. Let $\rho: \mathcal{H}_n\to\mathcal{H}_n$ be a multipartite density matrix and denote by 
$V_P\rho V_{P}^{\dag}$ the ``permuted version" of $\rho$, where $V_P$ is unitary operator that permutes the subsystems.
Then, the entanglement contained in a multipartite state, $\rho$, is said to be P-Symmetric, if by LOCC we can produce
the permuted version of the state; i.e. $V_P\rho V_{P}^{\dag}$. 
\end{definition}

The theorem below generalize the main result of~\cite{Hor10} to multipartite states. 

\begin{theorem}\label{sym}
Let $E_{inv}$ in Def.~\ref{ginv} be also invariant under some permutation $P$ of the $n$ qudits.
Let $\rho :\mathcal{H}_n\to\mathcal{H}_n$ be a mixed state for which $E_{inv}(\rho)>0$
and assume that the entanglement of $\rho$ is P-Symmetric. Then, the permuted version 
of the state can be achieved by some product unitary operation
$U_{A_1}\otimes U_{A_2}\otimes\cdots\otimes U_{A_n}$.
\end{theorem}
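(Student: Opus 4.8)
The plan is to combine the $P$-invariance of $E_{inv}$ with the entanglement-decay bound of Lemma~\ref{lem} to force a saturation of the geometric--arithmetic inequality, and then to argue that the resulting random-unitary channel must act on $\rho$ as a single product unitary.

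First I would unfold the hypothesis. Since the entanglement of $\rho$ is $P$-Symmetric, there is a trace-preserving separable (LOCC) operation $\Lambda(\cdot)=\sum_k M_k(\cdot)M_k^{\dag}$, with $M_k=A_1^{(k)}\otimes\cdots\otimes A_n^{(k)}$ and $\sum_k M_k^{\dag}M_k=\mathbb{I}$, such that $\Lambda(\rho)=V_P\rho V_P^{\dag}$. Because $E_{inv}$ is assumed invariant under $P$, we have $E_{inv}(V_P\rho V_P^{\dag})=E_{inv}(\rho)$. Writing $\sigma_k\equiv p_k^{-1}M_k\rho M_k^{\dag}$ and $p_k\equiv\mathrm{Tr}\,M_k\rho M_k^{\dag}$, convexity of $E_{inv}$ (it is a convex roof) gives $E_{inv}(\Lambda(\rho))\le\sum_k p_k E_{inv}(\sigma_k)$, while Lemma~\ref{lem} gives $\sum_k p_k E_{inv}(\sigma_k)\le E_{inv}(\rho)\sum_k|\det M_k|^{2/d}\le E_{inv}(\rho)$. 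Chaining these with $E_{inv}(\Lambda(\rho))=E_{inv}(V_P\rho V_P^{\dag})=E_{inv}(\rho)$ yields
\[
E_{inv}(\rho)\le E_{inv}(\rho)\sum_k|\det M_k|^{2/d}\le E_{inv}(\rho).
\]
Since $E_{inv}(\rho)>0$ by hypothesis, every inequality is in fact an equality, and in particular $\sum_k|\det M_k|^{2/d}=1$. The equality clause of Lemma~\ref{lem} then forces all local operators $A_i^{(k)}$ to be proportional to unitaries; i.e. $\Lambda$ is a mixture of product unitaries, $\Lambda(\cdot)=\sum_k q_k W_k(\cdot)W_k^{\dag}$ with $W_k=U_1^{(k)}\otimes\cdots\otimes U_n^{(k)}$, $q_k\ge 0$ and $\sum_k q_k=1$.

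The hard part is the final reduction from a \emph{mixture} of product unitaries to a \emph{single} product unitary. Substituting $\Lambda(\rho)=V_P\rho V_P^{\dag}$ and conjugating by $V_P^{\dag}$, I obtain $\rho=\sum_k q_k\,\widetilde{W}_k\rho\,\widetilde{W}_k^{\dag}$ with $\widetilde{W}_k\equiv V_P^{\dag}W_k$ unitary. Thus $\rho$ is a convex combination of the states $\widetilde{W}_k\rho\,\widetilde{W}_k^{\dag}$, each of which is isospectral to $\rho$ and hence has the same von Neumann entropy. By strict concavity of the entropy (equivalently, by peeling off the top eigenspace of $\rho$ and inducting on its orthogonal complement), a convex combination of isospectral density matrices that reproduces $\rho$ forces every summand with $q_k>0$ to equal $\rho$. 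Therefore $\widetilde{W}_k\rho\,\widetilde{W}_k^{\dag}=\rho$, i.e. $W_k\rho W_k^{\dag}=V_P\rho V_P^{\dag}$, for each such $k$. Choosing any single index $k$ with $q_k>0$ exhibits the desired product unitary $W_k=U_{A_1}\otimes\cdots\otimes U_{A_n}$ achieving the permuted state, completing the proof.
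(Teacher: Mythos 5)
Your proposal is correct and follows essentially the same route as the paper: saturate the bound of Lemma~\ref{lem} via $P$-invariance of $E_{inv}$ to force $\Lambda$ to be a mixture of product unitaries, then use strict concavity of the von Neumann entropy on a convex combination of isospectral states to collapse the mixture to a single product unitary. Your only (cosmetic) deviations are conjugating by $V_P^{\dag}$ before the entropy step and stating the conclusion more carefully as $W_k\rho W_k^{\dag}=V_P\rho V_P^{\dag}$ rather than the paper's looser ``$V_P$ is a product of unitaries.''
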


Note in particular that if a state with $E_{inv}>0$ has different entropies of subsystems, it can not be permuted by LOCC,
as local unitaries can not change local entropy. However, if a state has distinct local entropies, but $E_{inv}=0$,
then it may be possible to permute the state by LOCC. As a simplest example consider the state 
$\rho=\rho_{A_1}\otimes\rho_{A_2}\otimes\cdots\otimes\rho_{A_n}$. Clearly, any such state, or its permuted version, can be generated locally even though its local entropies can be distinct. Note also that
in three qubits, the SRT is also invariant under permutations~\cite{Cof00},
and therefore can be used for the theorem above for all permutations $P$. Same is true for all $E_{inv}$ that are also permutation invariant, such as the 4-tangle~\cite{Uh00} and all Hyperdeterminants~\cite{Miy03} (see also~\cite{Jens} for other such $E_{inv}$).
The proof below of theorem~\ref{sym} is based on lemma~\ref{lem} 
and follows the exact same lines as in Theorem~1 of~\cite{Hor10}.

\begin{proof}
Suppose $E(\rho)>0$ and let $\Lambda$ (as defined in lemma~\ref{lem}) represents an LOCC map that permutes $\rho$; i.e. $\Lambda(\rho)=V_{P}\rho V_{P}^{\dag}$. 
Hence, $\sum_{k}p_k\sigma_k=V_{P}\rho V_{P}^{\dag}$. Using the invariant of $E_{inv}$ under the permutation $P$ we get
\begin{align*}
E_{inv}(\rho)& =E_{inv}(V_{P}\rho V_{P}^{\dag})=E_{inv}(\sum_{k}p_k\sigma_k)\\
&\leq \sum_{k}p_kE_{inv}(\sigma_k)
\leq\sum_{k}|\det M_k|^{2/d}E_{inv}(\rho)
\end{align*}
Since we assumed that $E_{inv}(\rho)>0$ we get that $\sum_{k}|\det M_k|^{2/d}\geq 1$. Thus, from lemma~\ref{lem} we conclude that $\Lambda$ must be a mixture of product unitaries:
$$
\Lambda(\rho)=\sum_{k}p_kU_k\rho U_{k}^{\dag}=\sum_{k}p_k\sigma_k\;,
$$
where $U_k\equiv U_{A_{1}}^{(k)}\otimes U_{A_{2}}^{(k)}\otimes\cdots\otimes U_{A_{n}}^{(k)}$ is a product of unitaries.
Hence, all the states $\sigma_k$ have the same von Neumann entropy $S$ as $\rho$, and we get
$$
S\left(\sum_{k}p_k\sigma_k\right)=S\left(V_{P}\rho V_{P}^{\dag}\right)=S(\rho)=\sum_{k}p_kS(\sigma_k)\;.
$$ 
From the strict concavity of the von Neumann entropy we get that all the $\sigma_k$ are the same and therefore
$V_P=U_{A_{1}}^{(1)}\otimes U_{A_{2}}^{(1)}\otimes\cdots\otimes U_{A_{n}}^{(1)}$ is a product of unitaries.
\end{proof}

In conclusions, Eqs.~(\ref{main},\ref{nway})  provides us, for the first time, with closed expressions for the time
evolution of multipartite entanglement of a composite system interacting locally with the environment. These expressions
emerge from the SL-invariance of the measures defined in Def.~\ref{ginv}, and \emph{not} from the Jamiolkowski isomorphism
which is the methodology used in Ref.~\cite{Kon08,Mar08}. Amazingly, the evolution of multipartite entanglement (under one local channel) is determined completely by the ERF defined in Eq.(\ref{erf}), irrespective to the number of qudits in the system. 
For multi-qubits systems, the ERF has a closed formula given in terms of Wootters concurrence formula.
In other words, there is no need to solve any master equations in order to determine the time evolution of multipartite entanglement.

\emph{Acknowledgments:---}
The author is grateful for numerous illuminating discussions with Nolan Wallach.
This research is supported by NSERC.

\end{document}